\newtheorem{Theorem}{Theorem}
\newtheorem{Lemma}{Lemma}%
\newtheorem{Proposition}{Proposition}%
\newtheorem{Example}{Example}%
\theoremstyle{definition}
\DeclareMathOperator{\Exp}{E}
\DeclareMathOperator{\Var}{Var}
\DeclareMathOperator{\Covar}{Cov}
\DeclareMathOperator{\DKL}{D_\mathrm{KL}}
\DeclareMathOperator{\vol}{vol}
\DeclareMathOperator{\Info}{I}
\DeclareMathOperator{\Ent}{H}
\DeclareMathOperator{\ent}{h}
\DeclareMathOperator{\Det}{det}
\DeclareMathOperator{\covol}{covol}
\newcommand{\calB}{\mathcal{B}}
\newcommand{\calD}{\mathcal{D}}
\newcommand{\calP}{\mathcal{P}}
\newcommand{\calQ}{\mathcal{Q}}
\newcommand{\calV}{\mathcal{V}}
\newcommand{\calX}{\mathcal{X}}
\newcommand{\calY}{\mathcal{Y}}
\newcommand{\R}{\mathbb{R}}
\newcommand{\Z}{\mathbb{Z}}
\newcommand{\rv}[1]{{#1}}
\let\@fnsymbol\@arabic
\newcommand*\samethanks[1][\value{footnote}]{\footnotemark[#1]}
\title{\textbf{Information Properties of a Random Variable Decomposition through Lattices}}
\author{Fábio~C.~C.~Meneghetti\thanks{
	The authors are with the Institute of Mathematics, Statistics and Scientific Computing~(IMECC), University of Campinas~(Unicamp), Brazil. E-mail: \href{mailto:fabiom@ime.unicamp.br}{\texttt{fabiom@ime.unicamp.br}}, \href{mailto:hmiyamoto@ime.unicamp.br}{\texttt{hmiyamoto@ime.unicamp.br}}, \href{mailto:sueli@unicamp.br}{\texttt{sueli@unicamp.br}}.
	}
	\and Henrique~K.~Miyamoto\samethanks \and Sueli~I.~R.~Costa\samethanks}
\date{\vspace{-1.5em}}
\begin{document}

\maketitle

\begin{abstract}
	A full-rank lattice in the Euclidean space is a discrete set formed by all integer linear combinations of a basis. Given a probability distribution on $\mathbb{R}^n$, two operations can be induced by considering the quotient of the space by such a lattice: wrapping and quantization. For a lattice $\Lambda$, and a fundamental domain $\calD$ which tiles $\mathbb{R}^n$ through $\Lambda$, the wrapped distribution over the quotient is obtained by summing the density over each coset, while the quantized distribution over the lattice is defined by integrating over each fundamental domain translation. These operations define wrapped and quantized random variables over $\calD$ and $\Lambda$, respectively, which sum up to the original random variable. We investigate information-theoretic properties of this decomposition, such as entropy, mutual information and the Fisher information matrix, and show that it naturally generalizes to the more abstract context of locally compact topological groups.
	
	\vspace{1em}
	
	\noindent \textbf{Keywords:} Fisher information, information geometry, lattices, mutual information, quantization, topological groups, wrapped distributions.
\end{abstract}

\section{Introduction}

Lattices are discrete sets in $\mathbb{R}^n$ formed by all integer linear combinations of a set of independent vectors, and have found different applications, such as in information theory and communications~\cite{conway1999,zamir2014,costa2017}. Given a probability distribution in $\R^n$, two operations can be induced by considering the quotient of the space by a lattice: wrapping and quantization.

The wrapped distribution over the quotient is obtained by summing the probability density over each coset. It is used to define parameters for lattice coset coding, particularly for the AWGN and wiretap channels, such as the flatness factor, which is, up to a constant, the $L^\infty$ distance from a wrapped probability distribution to a uniform one~\cite{ling2014,damir2021}. This factor is equivalent to the smoothing parameter, used in post-quantum lattice-based cryptography~\cite{chung2013}. In the context of directional statistics, wrapping has been used as a standard way to construct distributions on a circle and on a torus~\cite{mardia2000}.

The quantized distribution over the lattice can be defined by integrating over each fundamental domain translation, thus corresponding to the distribution of the fundamental domains after lattice-based quantization is applied. Lattice quantization has different uses in signal processing and coding: for instance, it can achieve the optimal rate-distortion trade-off and can be used for shaping in channel coding~\cite{zamir2014}. A special case of interest is when the distribution on the fundamental region is uniform, which amounts to high-resolution quantization or dithered quantization~\cite{zamir1996,ling2013}.

In this work, we relate these two operations by remarking that the random variables induced by wrapping and quantization sum up to the original one. We study information properties of this decomposition, both from classical information theory~\cite{cover2006} and from information geometry~\cite{amari2000}, and provide some examples for the exponential and Gaussian distributions. We also propose a generalization of these ideas to locally compact groups. Probability distributions on these groups have been studied in \cite{heyer1977}, and some information-theoretic properties have been investigated in \cite{chirikjian2009, johnson2000, chirikjian2010}. In addition to probability measures, one can also define the notions of lattice and fundamental domains on them, thereby generalizing the Euclidean case. We show that wrapping and quantization are also well defined, and provide some illustrative examples.

\section{Lattices, Wrapping and Quantization} \label{sec:Lattices}

\subsection{Lattices and Fundamental Domains}

A lattice $\Lambda$ in $\R^n$ is a discrete additive subgroup of $\R^n$, or, equivalently, the set $\Lambda = \set{\alpha_1 b_1 + \dots + \alpha_k b_k \;\middle|\; \alpha_1,\dots, \alpha_k \in \Z}$ formed by all integer linear combinations of a set of linearly independent vector $\set{b_1, \dots, b_k} \subset \R^n$, called a \emph{basis} of $\Lambda$. A matrix $B$ whose column vectors forms a basis is called a \emph{generator matrix} of $\Lambda$, and we have $\Lambda = B \Z^k$. The lattice dimension is $k$, and, if $k=n$, the lattice is said to be \emph{full-rank}; we henceforth consider full-rank lattices.
A lattice $\Lambda$ defines an equivalence relation in $\R^n$: $x \sim y \iff x - y \in \Lambda$. The associated equivalence classes are denoted by $\bar x$ or $x+\Lambda$. The set of all equivalence classes is the lattice quotient ${\R^n}/{\Lambda}$, and we denote the standard projection $\pi \colon \R^n \to \R^n / \Lambda, \ \pi(x) = \bar x$.

Let $\calD$ be a Lebesgue-measurable set of $\R^n$ and $\Lambda$ a lattice. We say that $\calD$ is a \emph{fundamental domain} or a \emph{fundamental region} of $\Lambda$, or that $\calD$ \emph{tiles} $\R^n$ by $\Lambda$, if\footnote{
	It is often only asked that the intersection in item~2) has Lebesgue measure zero, but we require it to be empty.
} 1)~$\bigcup_{\lambda \in \Lambda} (\lambda + \calD) = \R^n$, and 2)~$(\lambda+\calD) \cap (\tilde\lambda+\calD) = \emptyset$, for all $\lambda \neq \tilde\lambda$ in $\Lambda$. Given a fundamental domain $\calD$, each coset $\bar x \in \R^n / \Lambda$ has a unique representative in $\calD$, i.e., the measurable map $\pi|_\calD \colon \calD \to \R^n / \Lambda$ is a bijection. This fact suggests using a fundamental domain to represent the quotient. Each fundamental domain contains exactly one lattice point, which may be chosen as the origin.
One example of fundamental domain is the \emph{fundamental parallelotope} with respect to a basis $\set{b_1, \dots, b_n}$, namely $\calP(\Lambda) \coloneqq \set{x=\alpha_1 b_1 + \dots + \alpha_n b_n \;\middle|\; \alpha_1, \dots, \alpha_n \in \left[0,1\right[}$. Another one is the Voronoi region~$\calV (\Lambda)$ of the origin, given by the points that are closer to the origin than to any other lattice point, with an appropriate choice for ties. It is a well-known fact that every fundamental domain has the same volume, denoted by $\covol\Lambda \coloneqq \vol\calD = \abs{\Det B}$, for any generator matrix $B$ of $\Lambda$.

\subsection{Wrapping and Quantization}

Consider $\R^n$ with the Lebesgue measure $\mu$, and $P$ a probability measure such that $P \ll \mu$. Then the probability density function~(pdf) of $P$ is $p = \od{P}{\mu}$, the Radon-Nikodym derivative. For fixed full-rank lattice $\Lambda$ and fundamental domain $\calD$, the \emph{wrapping} of $P$ by $\Lambda$ is the distribution $P_\pi \coloneqq \pi_* P$ on $\R^n / \Lambda$, given by $P_\pi(A) = P (\pi^{-1}A)$. For simplicity, we identify $\R^n / \Lambda$ with $\calD$ to regard $P_\pi$ as a distribution over $\calD$, and then we have $\pi\colon \R^n \to \calD$ given by $(y+\lambda) \mapsto y$, for all $y \in \calD, \lambda\in\Lambda$. Using this identification, the wrapping has density $p_\pi = \od{P_\pi}{\mu}$ given by

\begin{equation}
	p_\pi (y) = \sum_{\lambda \in \Lambda} p(y+\lambda).
\end{equation}

A construction that is, in some sense, dual to wrapping is quantization. Note that each fundamental domain $\calD$ partitions the space as $\R^n  = \bigsqcup_{\lambda \in \Lambda} (\lambda + \calD)$. The quantization function is the measurable map $\calQ \colon \R^n \to \Lambda$, given by $(y + \lambda) \mapsto \lambda$, for $y \in \calD$ and $\lambda \in \Lambda$. The \emph{quantized probability distribution} of $P$ on the discrete set $\Lambda$ is $P_\calQ \coloneqq \calQ_*P$, given by $P_\calQ (A) \coloneqq P (\calQ^{-1} A)$. The probability mass function of the quantized distribution is then

\begin{equation}
	p_\calQ (\lambda) = \int_{\calD} p(y+\lambda) \dif x. 
\end{equation}

Letting $\rv{X}$ be a vector random variable in $\R^n$ with distribution $p$, we define $\rv{X}_\pi \coloneqq \pi(\rv{X})$ and $\rv{X}_\calQ \coloneqq \calQ(\rv{X})$ the wrapped and quantized random variables, respectively. By definition, they are distributed according to $p_\pi$ and $p_\calQ$. Interestingly, they sum up to the original one:

\begin{equation} \label{eq:decomposition}
	\rv{X} = \rv{X}_\pi + \rv{X}_\calQ,
\end{equation}
since $\pi + \calQ = \mathrm{id}_{\R^n}$. Note also that $X_\pi + X_\calQ$ has the same distribution as $(X_\pi, X_\calQ)$, by the bimeasurable bijection $y+\lambda \mapsto (y,\lambda)$. These factors, however, are not independent, since, in general, $p(y+\lambda) \neq p_\pi(y) p_\calQ (\lambda)$. The difference between $p(x)$ and $(p_\pi \otimes p_\calQ)(x) \coloneqq p_\pi \del{\pi(x)}p_\calQ \del{\calQ(x)}$ shall be illustrated in the following examples.
Note that the expression for the quantized distribution depends on the choice of fundamental domain, while the wrapped distribution does not, up to a lattice translation.

We say a random variable $X$ over $[0,\infty)$ is \emph{memoryless} if $\bar C(t) = \bar C(t+s)/ \bar C(s)$ for all $t,s$, where $\bar C(t) \coloneqq P[X>t]$ is the tail distribution function. In particular, a memoryless distribution satisfies $\bar C(y+\lambda) = \bar C(y) \bar C(\lambda)$ for all $y\in\calD$, $\lambda\in\Lambda$, which implies $p = p_\pi \otimes p_\calQ$. The converse, however, is not true; for example, independence holds whenever $p$ is constant on each region $\lambda+\calD$, for $\lambda \in \Lambda$.

\begin{Example}  \label{ex:exponential}
	The exponential distribution, parametrized by $\nu>0$, is defined as $p(x) = \nu e^{-\nu x} \mathbbm{1}_{\left[0,+\infty\right[}(x)$, where $\mathbbm{1}_A(x)$ takes value $1$ if $x\in A$, and 0 otherwise.
	Choosing the lattice $\Lambda = \alpha \Z$, $\alpha \in \R_+$, and the fundamental domain $\calD = \left[0,\alpha\right[$, one can write closed-form expressions for the wrapped and quantized distributions:
	\begin{equation}
		p_\pi(y) = \frac{\nu e^{-\nu y}}{1 - e^{-\nu \alpha}}, \quad  y \in \calD
		\qquad \text{and} \qquad
		p_\calQ(\lambda) = e^{-\nu \lambda} \left( 1 - e^{-\nu \alpha} \right), \quad  \lambda \in \Lambda \cap \R_+.
	\end{equation}
	Note that, in this special case, $p = p_\pi \otimes p_\calQ$, as a consequence of memorylessness. The wrapped distribution with $\alpha=2\pi$, which amounts to a distribution on the unitary circle, is well studied in~\cite{jammalamadaka2004}.
\end{Example}

\begin{Example} \label{ex:gaussian}
	Consider the univariate Gaussian distribution $p(x) = (2\pi \sigma^2)^{-1/2} \exp (-(x-\mu)^2 / 2 \sigma^2)$ and the lattice $\Lambda = \alpha \Z$, with fundamental domain $\calD = \left[ -\frac{\alpha}{2}, \frac{\alpha}{2} \right[$, $\alpha \in \R_+$. The wrapped and quantized distributions are given respectively by
	\begin{equation*}
		p_\pi(y) =
		\frac{1}{\sqrt{2 \pi \sigma^2}} \sum_{i \in \Z} e^{-\frac{ (y-\mu+\alpha i)^2 }{2\sigma^2}}, \quad  y \in \calD
		\qquad \text{and} \qquad
		p_\calQ(\lambda) =
		\frac{1}{\sqrt{2 \pi \sigma^2}} \int_{\lambda - \frac{\alpha}{2}}^{\lambda + \frac{\alpha}{2}} e^{-\frac{(x - \mu)^2}{2 \sigma^2}} \dif x, \quad  \lambda \in \Lambda.
	\end{equation*}
	The value $\alpha=2\pi$ for the wrapped distribution on a unitary circle is usually considered in directional statistics~\cite{mardia2000}. Figure~\ref{fig:example_gaussian} illustrates the original, wrapped, quantized and product distributions for different zero-mean Gaussian distributions. As it can be seen in the figure, in this case, $p(x) \neq p_\pi(y)p_\calQ(\lambda)$.
	
	\begin{figure}
		\centering
		\subfloat[Original.]{\includegraphics[width=.22\textwidth]{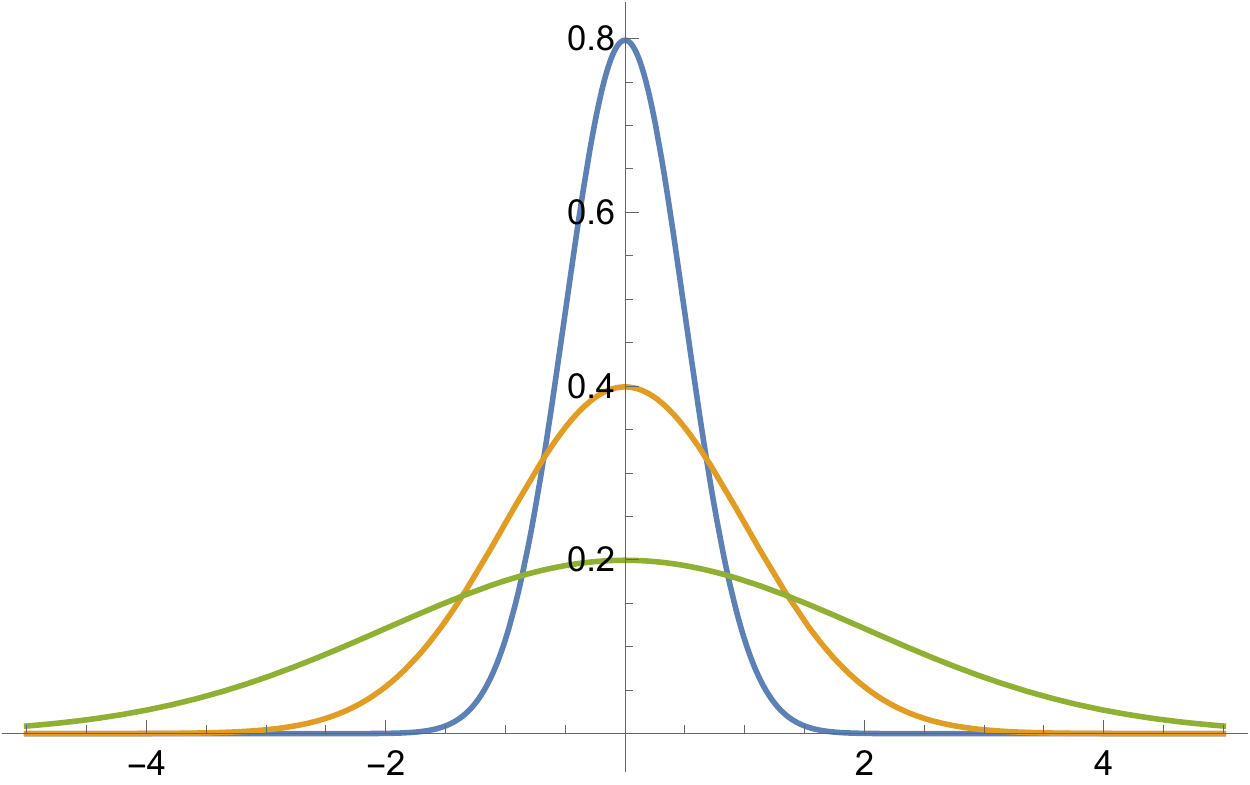}} \hspace{1em}
		\subfloat[Wrapped.]{\includegraphics[width=.22\textwidth]{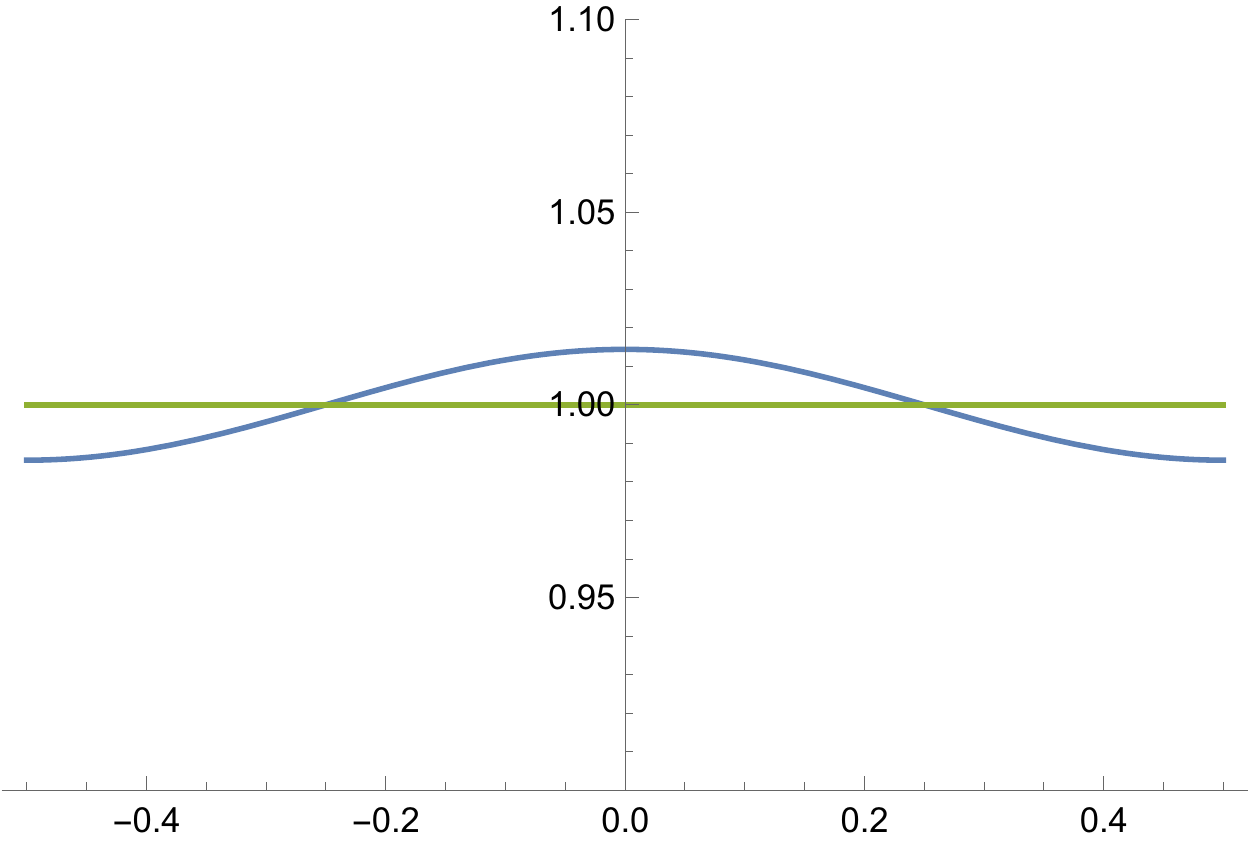}}
		\subfloat[Quantized.]{\includegraphics[width=.22\textwidth]{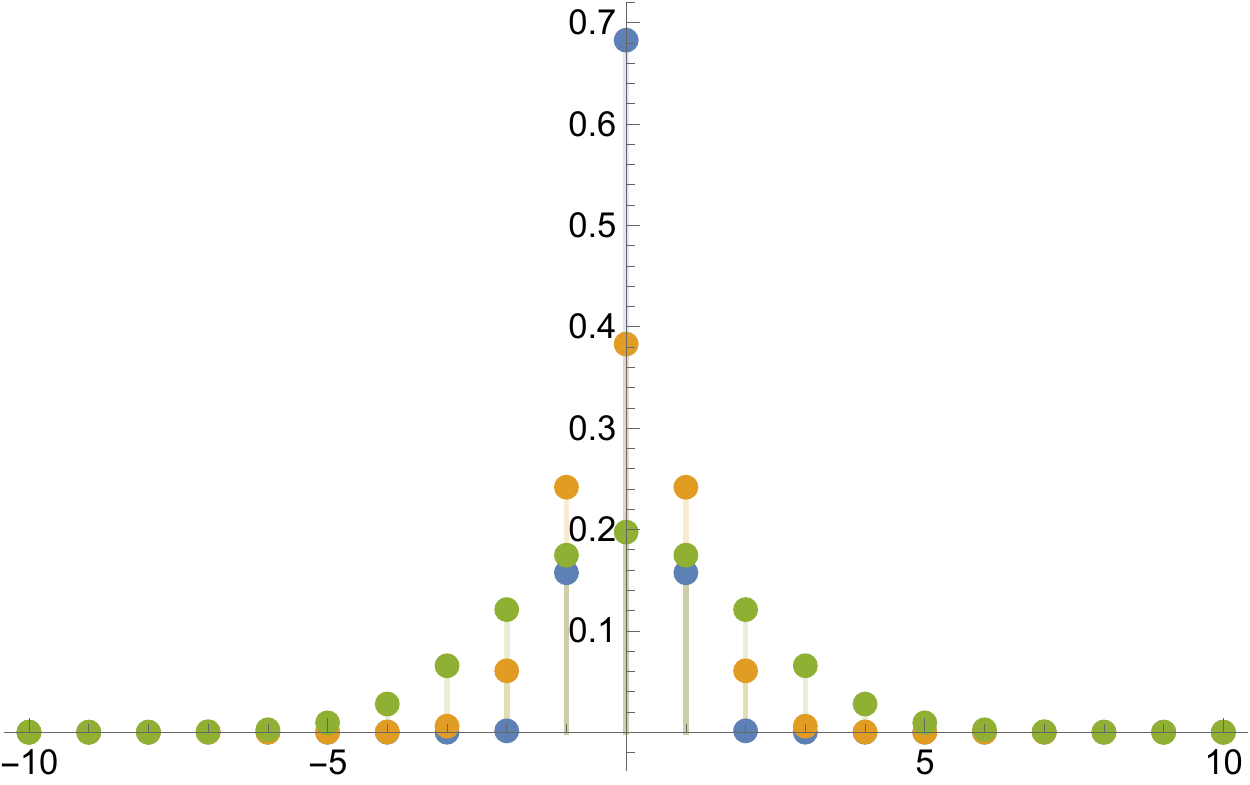}} \hspace{1em}
		\subfloat[Product.]{\includegraphics[width=.22\textwidth]{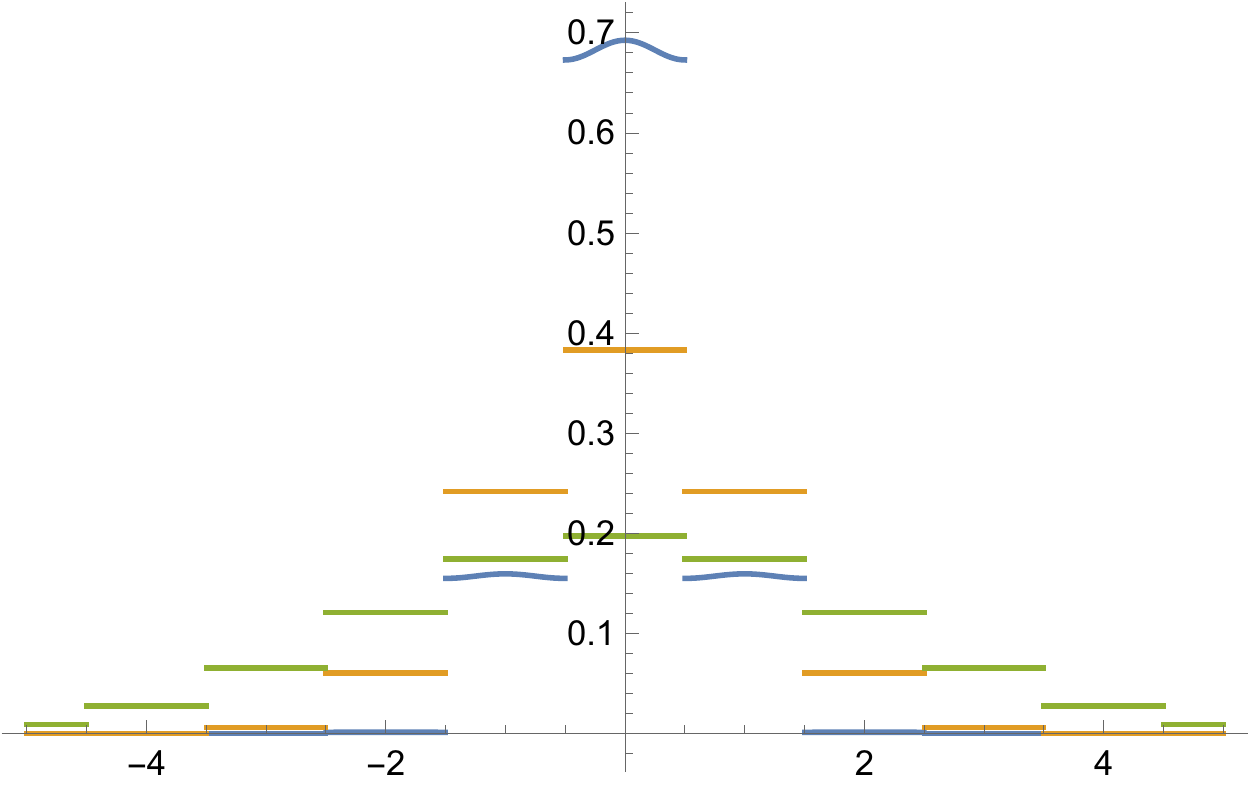}}
		\caption{Example of zero-mean Gaussian distributions and their corresponding wrapped, quantized and product distributions, with $\Lambda = \Z$ and $\calD = [-\frac{1}{2}, \frac{1}{2}[$ for different variances: $\sigma^2 = 0.25$ (blue), $\sigma^2 = 1$ (orange), $\sigma^2 = 4$ (green).}
		\label{fig:example_gaussian}
	\end{figure}
	
\end{Example}

A straightforward consequence of the decomposition \eqref{eq:decomposition} is
\begin{enumerate}
	\item\label{item:exp-decomp} $\Exp[\rv{X}] = \Exp[\rv{X}_\pi] + \Exp[\rv{X}_\calQ]$;
	\item $\Var[\rv{X}] = \Var[\rv{X}_\pi] + \Var[\rv{X}_\calQ] + \Covar[\rv{X}_\pi, \rv{X}_\calQ] + \Covar[\rv{X}_\calQ, \rv{X}_\pi]$,
\end{enumerate}
where $\Exp[\cdot]$, $\Var[\cdot]$ and $\Covar[\cdot,\cdot]$ denote respectively the expectation, the variance and the cross-covariance operators.

We note that different types of discretization have also been studied, other then integrating over a fundamental domain~\cite{chakraborty2015}. For instance, in~\cite{ling2014,nielsen2022,luzzi2018} the discretized distribution is defined by restricting the original pdf $p(x)$ to the lattice $\Lambda$, and then normalizing:

\begin{equation}
	D_{\Lambda,c} (\lambda)
	\coloneqq
	\frac{p(c+\lambda)}{\sum_{\tilde\lambda \in \Lambda} p(c+\tilde\lambda)},
\end{equation}
for a fixed $c \in \calD$. This discretization is nothing other than the conditional distribution of $\rv{X}_\calQ$ given that $\rv{X}_\pi = c$, expressed as $p_{\calQ | \pi} (\lambda | c) = {p(c+\lambda)}/{p_\pi(c)}$. Moreover, when $p = p_\pi \otimes p_\calQ$, such as in the exponential distribution, cf. Example \ref{ex:exponential}, then $D_{\Lambda,c} (\lambda) = p_\calQ (\lambda)$.

\section{Information Properties} \label{sec:InformationProp}

\subsection{Information-theoretic Measures}

Let us consider a random variable $\rv{X}$ with distribution $p$ and the induced wrapped and quantized ones, respectively, $\rv{X}_\pi \sim p_\pi$ and $\rv{X}_\calQ \sim p_\calQ$. The mutual information between $\rv{X}_\pi$ and $\rv{X}_\calQ$ is defined as the Kullback-Leibler divergence $\Info(\rv{X}_\pi; \rv{X}_\calQ) \coloneqq \DKL\left( p \| p_\pi \otimes p_\calQ \right)$,
and is a measure of how non-independent the marginal distributions $p_\pi$ and $p_\calQ$ are~\cite{cover2006}. Using the theorem of change of variables, we have
\begin{align}
	\Info(X_\pi; X_\calQ) 
	&= \Exp_X \left[ \log \textstyle\frac{p(X)}{ p_\pi \otimes p_\calQ (X) } \right] \nonumber\\
	&= \Exp_X [\log p(X)] - \Exp_X [\log p_\pi(X_\pi)] - \Exp_X [\log p_\calQ(X_\calQ) ]\nonumber\\
	&= \Exp_X [\log p(X)] - \Exp_{X_\pi} [\log p_\pi(X_\pi)] - \Exp_{X_\calQ} [\log p_\calQ(X_\calQ) ]\nonumber\\
	&= \ent(\rv{X}_\pi) + \Ent(\rv{X}_\calQ) - \ent(\rv{X}) \label{eq:MI-decomposition}.
\end{align}

\noindent
Note that, from this decomposition, we have $\ent(X) \le \ent(X_\pi) + \Ent(X_\calQ)$.

\begin{Proposition} \label{prop:mi-upper-bound}
	Let $\rv{X}$ be a random variable, and $\rv{X}_\pi$ and $\rv{X}_\calQ$ the respective wrapped and quantized random variables, using the lattice $\alpha\Z$. Denote $\mu_\calQ \coloneqq \Exp[\rv{X}_\calQ]$ and $\sigma_\calQ^2 \coloneqq \Var[\rv{X}_\calQ]$.
	If $\rv{X}$ has support $[0,\infty)$, then the mutual information~$\Info(\rv{X}_\pi; \rv{X}_\calQ)$ between $\rv{X}_\pi$ and $\rv{X}_\calQ$ is upper-bounded by
	\begin{equation} \label{eq:mi-bound-2}
		\Info(\rv{X}_\pi; \rv{X}_\calQ)
		< \log \left( e \left( \mu_\calQ + \alpha/2 \right) \right)
		- \ent(\rv{X}).
	\end{equation}
	If $\rv{X}$ has support $\R$, then~$\Info(\rv{X}_\pi; \rv{X}_\calQ)$ is upper-bounded by
	\begin{equation} \label{eq:mi-bound}
		\Info(\rv{X}_\pi; \rv{X}_\calQ)
		< \frac{1}{2}\log\del{ 2\pi e \sigma_\calQ^2}
		+ \frac{2 \log e}{\exp\del{2 \pi^2 \alpha^{-2} \sigma_\calQ^2} - 1}
		- \ent(\rv{X}).
	\end{equation}
\end{Proposition}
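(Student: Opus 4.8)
The plan is to start from the entropy decomposition of the mutual information in \eqref{eq:MI-decomposition},
\begin{equation*}
	\Info(\rv{X}_\pi;\rv{X}_\calQ) = \ent(\rv{X}_\pi) + \Ent(\rv{X}_\calQ) - \ent(\rv{X}),
\end{equation*}
and to upper-bound the two nonnegative terms separately (one may assume $\mu_\calQ$, resp.\ $\sigma_\calQ^2$, finite, as the claim is otherwise vacuous). Since $\rv{X}_\pi$ takes values in a fundamental domain of $\alpha\Z$, a set of measure $\alpha$, its differential entropy is at most that of the uniform law there, so $\ent(\rv{X}_\pi)\le\log\alpha$. For the discrete term I would use Gibbs' inequality, $\Ent(\rv{X}_\calQ)\le -\Exp[\log q(\rv{X}_\calQ)]$ for any probability mass function $q$ on the support of $\rv{X}_\calQ$, with a suitable choice of $q$ in each case.

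In the first case, take $\calD=[0,\alpha)$, so $\rv{X}_\calQ$ is supported on $\alpha\Z_{\ge 0}$ with mean $\mu_\calQ$; let $q$ be the geometric law there with the same mean, $q(\alpha k)=(1-r)r^k$, $r=\mu_\calQ/(\mu_\calQ+\alpha)$. A short computation gives $\Ent(\rv{X}_\calQ)\le (1+m)\log(1+m)-m\log m$ with $m\coloneqq\mu_\calQ/\alpha$. Combined with $\ent(\rv{X}_\pi)\le\log\alpha$, inequality \eqref{eq:mi-bound-2} reduces to the scalar inequality $(1+m)\log(1+m)-m\log m<\log(e(m+\tfrac{1}{2}))$. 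I would prove the latter by showing that $f(m)\coloneqq\log(e(m+\tfrac{1}{2}))-(1+m)\log(1+m)+m\log m$ satisfies $f(0)=\log(e/2)>0$, $f(m)\to 0$ as $m\to\infty$, and $f'(m)=\log e\,\bigl(\tfrac{1}{m+1/2}-\ln\tfrac{m+1}{m}\bigr)<0$, where $f'<0$ is exactly the logarithmic-mean/arithmetic-mean inequality $\ln\tfrac{m+1}{m}>\tfrac{2}{2m+1}$.

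In the second case, $\rv{X}_\calQ$ is supported on $\alpha\Z$ with variance $\sigma_\calQ^2$; let $q$ be the discrete Gaussian on $\alpha\Z$ with centre $\mu_\calQ$ and \emph{parameter} $\sigma_\calQ$, i.e.\ $q(\lambda)=Z^{-1}\exp\bigl(-(\lambda-\mu_\calQ)^2/2\sigma_\calQ^2\bigr)$, $Z=\sum_{\lambda\in\alpha\Z}\exp\bigl(-(\lambda-\mu_\calQ)^2/2\sigma_\calQ^2\bigr)$. Matching the parameter to $\sigma_\calQ$ (rather than matching the variance) makes $-\Exp[\log q(\rv{X}_\calQ)]=\log Z+\tfrac{1}{2}\log e$ hold exactly, since $\Exp[(\rv{X}_\calQ-\mu_\calQ)^2]=\sigma_\calQ^2$. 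Poisson summation yields
\begin{equation*}
	Z=\frac{\sqrt{2\pi\sigma_\calQ^2}}{\alpha}\sum_{m\in\Z}\cos\!\bigl(\tfrac{2\pi m\mu_\calQ}{\alpha}\bigr)\exp\!\bigl(-\tfrac{2\pi^2\sigma_\calQ^2}{\alpha^2}m^2\bigr),
\end{equation*}
and bounding $\cos(\cdot)\le 1$, then $\sum_{m\ge 1}e^{-\beta m^2}<\sum_{m\ge 1}e^{-\beta m}=(e^\beta-1)^{-1}$ with $\beta=2\pi^2\sigma_\calQ^2/\alpha^2$, then $\log(1+x)\le x\log e$, gives $\log Z<\tfrac{1}{2}\log(2\pi\sigma_\calQ^2/\alpha^2)+2\log e/(e^\beta-1)$. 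Adding $\ent(\rv{X}_\pi)\le\log\alpha$ then reproduces exactly \eqref{eq:mi-bound}, with the strict inequality inherited from $e^{-4\beta}<e^{-2\beta}$.

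I expect the real-support case to be the only substantive point. The natural move of comparing $\rv{X}_\calQ$ to the discrete Gaussian of the \emph{same variance} is a trap, because that Gaussian's parameter is then the solution of a transcendental theta-function equation; using the discrete Gaussian of \emph{parameter} $\sigma_\calQ$ instead keeps the second-moment contribution equal to the clean constant $\tfrac{1}{2}\log e$ and isolates the theta-tail $\sum_{m\ge 1}e^{-\beta m^2}$, whose crude geometric majorant is precisely the term $\bigl(\exp(2\pi^2\alpha^{-2}\sigma_\calQ^2)-1\bigr)^{-1}$ appearing in \eqref{eq:mi-bound}. The $[0,\infty)$ case is routine once the monotonicity of $f$ is established, the only nuance being the logarithmic-mean inequality used to sign $f'$.
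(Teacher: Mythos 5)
Your proposal is correct, and its skeleton is exactly the paper's: start from the decomposition \eqref{eq:MI-decomposition}, bound $\ent(\rv{X}_\pi)\le\log\alpha$ by the maximum-entropy property of the uniform law on a fundamental domain, and then bound the discrete entropy $\Ent(\rv{X}_\calQ)$ of the (rescaled) integer-valued variable $\alpha^{-1}\rv{X}_\calQ$ in terms of its mean, respectively its variance. The only real difference is how that last step is handled: the paper simply invokes the maximum-entropy bounds for integer-valued random variables from \cite[Thms.~8 and 10]{rioul2022}, namely $\Ent(\rv{X}_\calQ)<\log\del{e\del{\mu_\calQ/\alpha+1/2}}$ and $\Ent(\rv{X}_\calQ)<\frac{1}{2}\log\del{2\pi e\alpha^{-2}\sigma_\calQ^2}+\frac{2\log e}{\exp\del{2\pi^2\alpha^{-2}\sigma_\calQ^2}-1}$, whereas you re-derive them from scratch: a Gibbs (cross-entropy) comparison with the mean-matched geometric law, followed by the elementary inequality $(1+m)\log(1+m)-m\log m<\log\del{e(m+\tfrac12)}$ proved by monotonicity via the logarithmic-mean inequality; and, for the real-support case, a cross-entropy comparison with the discrete Gaussian of \emph{parameter} (not matched variance) $\sigma_\calQ$, plus Poisson summation and the geometric-series majorization of the theta tail. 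Both routes are sound and yield identical constants; the paper's buys brevity at the price of an external reference, while yours is self-contained and in effect reconstructs the proofs behind the cited theorems (your parameter-matching trick and theta-tail bound are essentially how those reference results are obtained). Your explicit caveats—taking $\calD=[0,\alpha)$ so that $\rv{X}_\calQ$ is supported on $\alpha\Z_{\ge0}$, and assuming $\mu_\calQ$, $\sigma_\calQ^2$ finite—are implicit but unstated in the paper, and are worth keeping.
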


\begin{proof}
	First, $\ent(X_\pi) \le \log \alpha$, since the uniform distribution maximizes entropy on a bounded support.
	Then, note that the mean and variance of the integer-valued random variable $\alpha^{-1} X_\calQ$ are $\alpha^{-1} \mu_\calQ$ and $\alpha^{-2} \sigma_\calQ^2$, respectively.
	For~\eqref{eq:mi-bound-2}, use that, for positive integer random variables, $H(X_\calQ) < \log \left( e \left( \mu_\calQ/\alpha + 1/2 \right) \right)$, as in \cite[Thm.~8]{rioul2022};
	for~\eqref{eq:mi-bound}, the upper-bound for integer-valued random variables from \cite[Thm.~10]{rioul2022} gives us $\Ent(X_\calQ) < \frac{1}{2} \log\del{2\pi e \alpha^{-2} \sigma_\calQ^2} + \frac{2 \log e}{\exp\del{2 \pi^2 \alpha^{-2} \sigma_\calQ^2} - 1}$.
	Replacing the corresponding inequalities in \eqref{eq:MI-decomposition} yields the desired results.
\end{proof}

The following lemma can be found in \cite[Appendix~3]{zamir2014}.

\begin{Lemma}
\label{lem:wrapping-reduces-H}
	$\ent(X_\pi) \le \ent(X)$.
\end{Lemma}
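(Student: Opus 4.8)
The plan is to identify the gap $\ent(X)-\ent(X_\pi)$ with a conditional Shannon entropy of the discrete part $X_\calQ$, which is nonnegative by construction. Intuitively, $X_\pi = \pi(X)$ is a deterministic function of $X$; differential entropy has no clean data-processing inequality in general, but the ``folding'' map $\pi$ is a translation on each tile $\lambda+\calD$, so $\abs{\Det D\pi}=1$ and there is no Jacobian correction — folding can only pile up density (via the sum defining $p_\pi$), and higher density means lower entropy.

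Concretely, recall from \eqref{eq:decomposition} and the remark after it that $X$ has the same law as $(X_\pi,X_\calQ)$ under the bimeasurable bijection $y+\lambda\mapsto(y,\lambda)$, which sends the Lebesgue measure on $\R^n$ to the product of the Lebesgue measure on $\calD$ with the counting measure on $\Lambda$. Hence the joint law of $(X_\pi,X_\calQ)$ has density $f(y,\lambda)=p(y+\lambda)$ with respect to that product measure, and, partitioning $\R^n=\bigsqcup_{\lambda\in\Lambda}(\lambda+\calD)$,
\begin{equation*}
	\ent(X) = -\int_{\R^n} p(x)\log p(x)\dif x = -\sum_{\lambda\in\Lambda}\int_\calD p(y+\lambda)\log p(y+\lambda)\dif y .
\end{equation*}
I would then split $\log p(y+\lambda)=\log p_\pi(y)+\log p_{\calQ\mid\pi}(\lambda\mid y)$ wherever $p_\pi(y)>0$ (the set $p_\pi(y)=0$ contributes nothing). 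Summing over $\lambda$ and integrating, the first term yields $-\int_\calD p_\pi(y)\log p_\pi(y)\dif y=\ent(X_\pi)$, and the second yields $\Ent(X_\calQ\mid X_\pi)\coloneqq\int_\calD p_\pi(y)\bigl(-\sum_{\lambda}p_{\calQ\mid\pi}(\lambda\mid y)\log p_{\calQ\mid\pi}(\lambda\mid y)\bigr)\dif y$. For each $y$ the inner sum is the Shannon entropy of the pmf $\lambda\mapsto p(y+\lambda)/p_\pi(y)$ on $\Lambda$, hence $\ge 0$; therefore $\ent(X)=\ent(X_\pi)+\Ent(X_\calQ\mid X_\pi)\ge\ent(X_\pi)$.

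The only real obstacle is measure-theoretic bookkeeping: making sure the rearrangement does not create an indeterminate $\infty-\infty$. I would sidestep this by writing the difference directly as
\begin{equation*}
	\ent(X)-\ent(X_\pi)=\sum_{\lambda\in\Lambda}\int_\calD p(y+\lambda)\log\frac{p_\pi(y)}{p(y+\lambda)}\dif y ,
\end{equation*}
whose integrands are pointwise nonnegative, since $0\le p(y+\lambda)\le p_\pi(y)$ by the very definition of $p_\pi$, and $p_\pi<\infty$ almost everywhere because $p_\pi$ is integrable over the bounded set $\calD$ (with the convention $0\log(\cdot)=0$). Thus each term lies in $[0,\infty]$, the interchange of sum and integral is justified by Tonelli's theorem, and the sum is nonnegative irrespective of the finiteness of $\ent(X)$; when $\ent(X_\pi)$ is finite the identity delivers the stated inequality between finite quantities. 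None of these verifications is deep — they are just the routine checks that $p_\pi$ is finite a.e., that the two expansions of $\ent(X)$ agree, and that nonnegative integrands may be summed and integrated freely.
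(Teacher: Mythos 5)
Your proposal is correct and follows essentially the same route as the paper, which proves the lemma via the chain rule $\ent(X) = \ent(X_\pi) + \Ent(X_\calQ \mid X_\pi)$ together with the nonnegativity of the discrete conditional entropy $\Ent(X_\calQ \mid X_\pi)$. Your additional measure-theoretic care (writing the difference with pointwise nonnegative integrands and invoking Tonelli) merely fills in details the paper leaves implicit; note only that finiteness of $p_\pi$ a.e.\ follows from $\int_\calD p_\pi \dif y = 1$, not from boundedness of $\calD$, which need not hold.
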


\begin{proof}
	$\ent(X) = \ent(X_\pi) + \Ent(X_\calQ | X_\pi)$, and $\Ent(X_\calQ | X_\pi) \ge 0$, since it is a discrete entropy.
\end{proof}

\begin{Proposition} \label{prop:scaling}
	Let $\Lambda_\alpha \coloneqq \alpha \Lambda$, $\alpha>0$, be a family of lattices, with fundamental domains $\calD_\alpha \coloneqq \alpha \calD$.
	\begin{enumerate}
		\item If $\calD$ is connected, and $p$ is continuous and Riemann-integrable, then $\lim_{\alpha \to 0} \Info(X_\pi; X_\calQ) = 0$.
		\item If $0$ is an interior point of $\calD$, then $ \lim_{\alpha \to +\infty} \Info(X_\pi; X_\calQ) = 0$.
	\end{enumerate}
\end{Proposition}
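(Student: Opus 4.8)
The plan is to treat both limits through the entropy identity \eqref{eq:MI-decomposition}, $\Info(X_\pi;X_\calQ) = \ent(X_\pi) + \Ent(X_\calQ) - \ent(X)$, keeping in mind that $X_\pi$, $X_\calQ$, $p_\pi$, $p_\calQ$ now depend on $\alpha$ through $\Lambda_\alpha = \alpha\Lambda$ and $\calD_\alpha = \alpha\calD$, while $\ent(X)$ is fixed (and, as throughout this section, finite). Part~(1) is the fine-quantization regime: $\ent(X_\pi^{(\alpha)})$ and $\Ent(X_\calQ^{(\alpha)})$ each diverge, but their sum tends to $\ent(X)$, so that $\Info \to 0$. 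Concretely, I would rescale by $\alpha$ to work on the fixed $\calD$ and $\Lambda$. Writing $y = \alpha z$ with $z \in \calD$, the function $\tilde p_\alpha(z) \coloneqq \alpha^n(\covol\Lambda)\, p_\pi^{(\alpha)}(\alpha z) = (\covol\Lambda_\alpha)\sum_{\lambda \in \Lambda_\alpha} p(\alpha z + \lambda)$ is a Riemann sum for $\int_{\R^n} p = 1$, so $\tilde p_\alpha \to \mathbbm{1}_\calD$ as $\alpha \to 0$; substituting back, $\ent(X_\pi^{(\alpha)}) = \log(\alpha^n\covol\Lambda) - (\covol\Lambda)^{-1}\!\int_\calD \tilde p_\alpha \log \tilde p_\alpha\, \dif z = \log(\alpha^n\covol\Lambda) + o(1)$. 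Likewise $p_\calQ^{(\alpha)}(\lambda) = \int_{\lambda + \calD_\alpha} p \approx (\alpha^n\covol\Lambda)\,p(\lambda)$ for $\lambda \in \Lambda_\alpha$, and expanding $-\sum_\lambda p_\calQ^{(\alpha)}(\lambda)\log p_\calQ^{(\alpha)}(\lambda)$ gives $\Ent(X_\calQ^{(\alpha)}) = -\log(\alpha^n\covol\Lambda) + \ent(X) + o(1)$, again by Riemann-sum convergence, now of $\sum_\lambda (\alpha^n\covol\Lambda)\,p(\lambda)\log p(\lambda) \to \int p\log p = -\ent(X)$. Adding the three terms of \eqref{eq:MI-decomposition}, the $\pm\log(\alpha^n\covol\Lambda)$ and $\pm\ent(X)$ cancel and $\Info \to 0$. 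Connectedness of $\calD$ (so that the translates $\alpha\lambda + \calD_\alpha$ form shrinking-mesh tagged partitions) together with continuity and Riemann-integrability of $p$ (and of $p\log p$) are exactly what make these passages to the limit legitimate.

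For part~(2) I would instead bound $\Info$ by the quantized entropy. Rearranging \eqref{eq:MI-decomposition} with the chain rule $\ent(X) = \ent(X_\pi) + \Ent(X_\calQ \mid X_\pi)$ (the identity behind Lemma~\ref{lem:wrapping-reduces-H}) yields $0 \le \Info(X_\pi;X_\calQ) = \Ent(X_\calQ) - \Ent(X_\calQ \mid X_\pi) \le \Ent(X_\calQ)$, since a discrete conditional entropy is nonnegative, so it suffices to show $\Ent(X_\calQ^{(\alpha)}) \to 0$. Since $0$ is an interior point of $\calD$, there is $r > 0$ with $B(0,r) \subseteq \calD$, hence $B(0,\alpha r) \subseteq \calD_\alpha$ and $p_\calQ^{(\alpha)}(0) = \Prob{X \in \calD_\alpha} \ge \Prob{X \in B(0,\alpha r)} \to 1$ by continuity from below of $P$. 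Putting $\epsilon_\alpha \coloneqq 1 - p_\calQ^{(\alpha)}(0) \to 0$ and applying the chain rule to the event $\{X_\calQ^{(\alpha)} = 0\}$ splits $\Ent(X_\calQ^{(\alpha)})$ into a binary-entropy term that vanishes plus $\epsilon_\alpha\,\Ent\big(X_\calQ^{(\alpha)} \mid X_\calQ^{(\alpha)} \neq 0\big)$, so the claim reduces to showing this residual vanishes. When $X$ has bounded support only finitely many translates $\lambda + \calD_\alpha$ meet it, so the conditional entropy is $O(1)$ and we are done; in general I would control $\Exp\big[\lVert X_\calQ^{(\alpha)}\rVert \mid X_\calQ^{(\alpha)} \neq 0\big]$ — which, under a mild moment/tail condition on $X$, grows no faster than a constant times $1/\epsilon_\alpha$ — and feed it into a maximum-entropy estimate for lattice-supported distributions in the spirit of Proposition~\ref{prop:mi-upper-bound}, making the residual $O\!\big(\epsilon_\alpha \log(1/\epsilon_\alpha)\big) \to 0$.

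The main obstacle in both parts is upgrading pointwise convergence of densities to convergence of the entropies, since differential/discrete entropy is only lower semicontinuous in general: in part~(1) one must justify interchanging the limit with the integral/sum in the Riemann-sum arguments (precisely the role of the continuity and integrability hypotheses, and of connectedness of $\calD$), and in part~(2) one must rule out entropy leaking to infinity through the tail of $X_\calQ^{(\alpha)}$ (the role of the tail condition). A more uniform alternative worth checking is to work directly with $\Info = \DKL\big(p \,\big\|\, p_\pi^{(\alpha)} \otimes p_\calQ^{(\alpha)}\big)$: in both regimes the ratio tends to $1$ pointwise a.e.\ and $p_\pi^{(\alpha)} \otimes p_\calQ^{(\alpha)} \to p$ in $L^1$ by Scheffé's lemma; using $p_\pi^{(\alpha)}(\pi(x)) \ge p(x)$ one can dominate the integrand on $\{X_\calQ^{(\alpha)} = 0\}$ and invoke dominated convergence, the delicate point once more being the contribution of $\{X_\calQ^{(\alpha)} \neq 0\}$.
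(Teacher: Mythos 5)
Your proposal follows essentially the paper's route: part (1) is the fine-quantization argument adapted from \cite[Thm.~8.3.1]{cover2006} run through the identity \eqref{eq:MI-decomposition}, and part (2) is the bound $\Info(X_\pi;X_\calQ)\le\Ent(X_\calQ)$ coming from Lemma~\ref{lem:wrapping-reduces-H} together with $p_\calQ(0)\to1$. Two differences in execution are worth noting. In part (1) you compute full asymptotics for both $\ent(X_\pi)$ and $\Ent(X_\calQ)$; the paper instead sandwiches, $0\le \Info(X_\pi;X_\calQ) \le \Ent(X_\calQ)+\log\del{\vol\calD_\alpha}-\ent(X)$, using only $\ent(X_\pi)\le\log\vol\calD_\alpha$ and nonnegativity of mutual information, which spares it exactly the step you leave sketchy (upgrading $\tilde p_\alpha\to 1$ to $\int_\calD\tilde p_\alpha\log\tilde p_\alpha\to0$). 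The paper also invokes the mean value theorem --- this is where connectedness of $\calD$ actually enters --- to produce exact tags $x_{\lambda,\alpha}\in\lambda+\calD_\alpha$ with $p(x_{\lambda,\alpha})\vol\calD_\alpha=p_\calQ(\lambda)$, whereas your tag $\lambda$ need not lie in the cell $\lambda+\calD_\alpha$ unless $0\in\calD$, a hypothesis part (1) does not make (a repairable, but real, technical wrinkle). In part (2) you are more careful than the paper: the paper passes directly from $p_\calQ(0)\to1$ to ``$\Ent(X_\calQ)$ can be made arbitrarily small,'' while you correctly point out that the residual conditional entropy on $\{X_\calQ\neq0\}$ must be controlled, since discrete entropy is not continuous under $p_\calQ(0)\to1$ alone, and you close it for bounded support or under a first-moment/tail condition via a maximum-entropy bound in the spirit of Proposition~\ref{prop:mi-upper-bound}. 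That extra hypothesis is not in the statement, but the subtlety you flag is genuine and is simply elided in the paper's own proof; apart from it, your argument is sound and matches the paper's.
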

\begin{proof}
	For $\alpha \to 0$, the proof is an adaptation of \cite[Thm.~8.3.1]{cover2006}. Since $\calD$ is connected and $p$ is continuous, we can use the mean value theorem: for every $\lambda \in \Lambda$ there exists a $x_{\lambda,\alpha} \in (\lambda+\calD_\alpha)$ such that $p(x_{\lambda,\alpha}) \vol{\calD_\alpha} = p_\calQ(\lambda)$. Therefore, we can write $\Ent(X_\calQ) = - \sum_{\lambda \in \Lambda_\alpha} p(x_{\lambda,\alpha}) \log\del{p(x_{\lambda,\alpha})} \vol\calD_\alpha - \log\del{\vol\calD_\alpha}$, using that $\sum_{\lambda \in \Lambda_\alpha} p(x_{\lambda,\alpha}) \vol\calD_\alpha = 1$. The first term is an $n$-dimensional Riemann sum, and converges to $\ent(X)$ when $\alpha \to 0$, while the second term gets arbitrarily small. Therefore, $0 \le \Info(X_\pi; X_\calQ) \le \Ent(X_\calQ) + \log\del{\vol\calD_\alpha} - \ent(X) \to 0$, so $\Info(X_\pi; X_\calQ) \to 0$.
	
	For $\alpha \to +\infty$, note that, from Lemma~\ref{lem:wrapping-reduces-H}, $\Info(X_\pi; X_\calQ) \le \Ent(X_\calQ)$. But, by choosing $\alpha$ sufficiently large, we can make $p_\calQ(0) = \int_{\calD_\alpha} p(x) \dif x$ arbitrarily close to $1$, since $0$ is in the interior of $\calD_\alpha$. Therefore, $\Ent(X_\calQ)$ can be made arbitrarily small.
\end{proof}

\begin{figure}
	\centering
	\subfloat[Exponential distributions.\label{fig:mi-exp}]{\includegraphics[width=.45\textwidth]{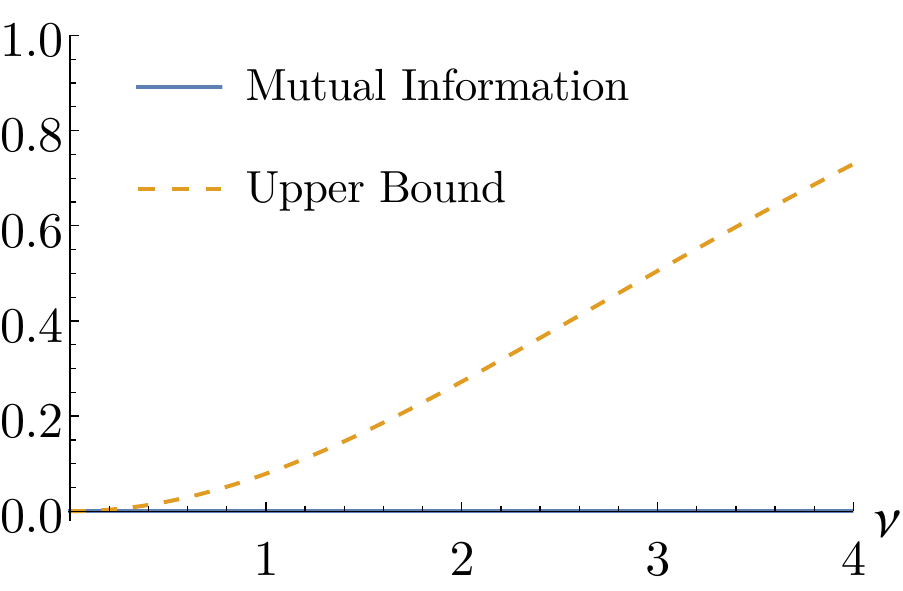}}
	\hfill
	\subfloat[Zero-mean Gaussian distributions.\label{fig:mi-gauss}]{\includegraphics[width=.45\textwidth]{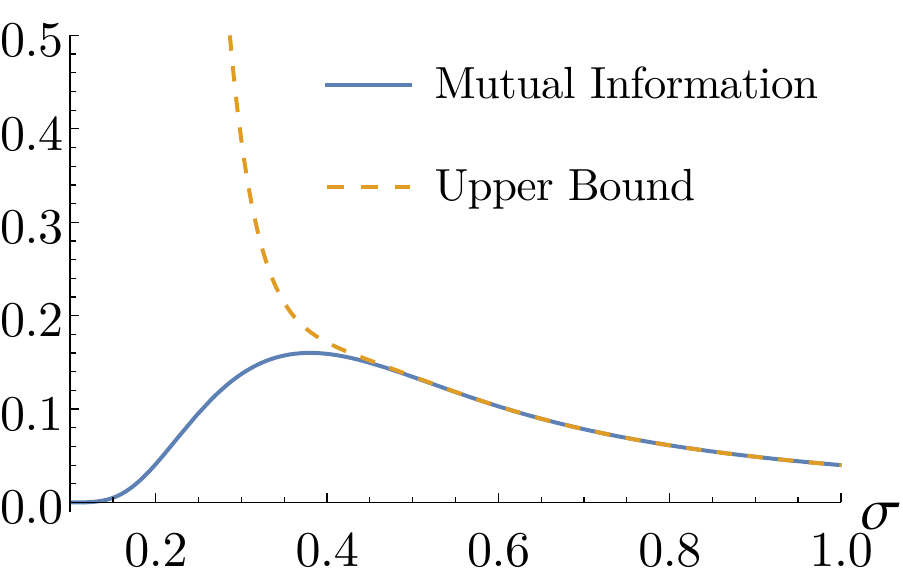}}
	\caption{Mutual information $\Info(\rv{X}_\pi; \rv{X}_\calQ)$ and its upper bound.}
	\label{fig:mutual-inf}
\end{figure}

\begin{Example}
	In the case of the exponential distributions, as in Example~\ref{ex:exponential}, the distributions of $\rv{X}_\pi$ and $\rv{X}_\calQ$ are independent, i.e., $p = p_\pi \otimes p_\calQ$, therefore $\Info(\rv{X}_\pi; \rv{X}_\calQ) = 0$. The mutual information and the corresponding upper bound \eqref{eq:mi-bound-2} are plotted in Figure~\ref{fig:mi-exp}, as function of the parameter~$\nu$.
\end{Example}

\begin{Example}
	In the case of the univariate zero-mean Gaussian distributions, as in Example~\ref{ex:gaussian}, one can use \eqref{eq:MI-decomposition} to numerically compute the mutual information $\Info(\rv{X}_\pi; \rv{X}_\calQ)$, as a function of the standard deviation $\sigma$, and compare it with the upper bound \eqref{eq:mi-bound} (Figure~\ref{fig:mi-gauss}). Interestingly, $\Info(\rv{X}_\pi; \rv{X}_\calQ)$ vanishes as $\sigma \to 0$ or $\sigma \to +\infty$, which is equivalent to choosing a lattice $\Lambda = \alpha\Z$ with $\alpha \to 0$ or $\alpha \to +\infty$, cf.~Proposition~\ref{prop:scaling}. The mutual information attains a maximum in $\sigma \approx 0.38$, showing this is the value for which $X_\pi$ and $X_\calQ$ are the least independent. 
\end{Example}

\subsection{Fisher Information}

Let $M = \set{p_\theta : \theta \in \Theta}$ be a family of probability densities $p_\theta\colon \R^n \to \R_+$ smoothly parametrized by $\theta$ in an open set $\Theta \subset \R^d$. The \emph{Fisher information matrix} is defined as the positive semi-definite matrix $G(\theta)$ with coefficients $g_{ij} (\theta) = \Exp_{p_\theta}[ \partial_i \ell_\theta \partial_j \ell_\theta ]$, where $\ell_\theta(x) \coloneqq \log p_\theta(x)$. When $M$ is a manifold satisfying certain regularity conditions~\cite{amari2000}, and $G$ is positive definite, then it becomes a Riemannian manifold with the metric given by $g_{ij}(\theta)$, called a \emph{statistical manifold}.
Let $\preceq$ denote the Loewner partial order for matrices, given by $A \preceq B$ if, and only if, $B-A$ is positive semi-definite. The following results justify the name \emph{information matrix} given to this quantity.

\begin{Proposition}[\cite{amari2000,kagan2001}]
	Let $\rv{X}$ be a random variable distributed according to a distribution parametrized by~$\theta$, and $G(\theta)$ its information matrix. The following hold.
	\begin{enumerate}
		\item \textbf{Monotonicity:} if $F\colon \calX \to \calY$ is a measurable function (i.e. a statistic) and $G_F (\theta)$ is the information matrix of $F(\rv{X})$, then $G_F(\theta) \preceq G(\theta)$, with equality if, and only if, $F$ is a sufficient statistic for $\theta$.
		\item \textbf{Additivity:} if $\rv{X}, \rv{Y}$ are independent random variables, then  the joint information matrix satisfies $G_{(X,Y)} (\theta) = G_X(\theta) + G_Y(\theta)$.
	\end{enumerate}
\end{Proposition}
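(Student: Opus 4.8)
The plan is to treat the two assertions separately, both via the \emph{score} $s_\theta(x) \coloneqq \nabla_\theta \ell_\theta(x)$, under the usual regularity conditions (interchange of $\nabla_\theta$ and integration) that guarantee $\Exp_{p_\theta}[s_\theta] = 0$ and $G(\theta) = \Exp_{p_\theta}\del{s_\theta s_\theta^\top}$; throughout, all expectations are taken under $p_\theta$.

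For \textbf{additivity}, I would start from the fact that independence of $\rv X$ and $\rv Y$ factors the joint density, $p_\theta(x,y) = p^X_\theta(x)\,p^Y_\theta(y)$, whence the joint score is additive, $s^{(X,Y)}_\theta(x,y) = s^X_\theta(x) + s^Y_\theta(y)$. Expanding $\Exp\del{s^{(X,Y)}_\theta (s^{(X,Y)}_\theta)^\top}$ gives four terms: the two diagonal ones reproduce $G_X(\theta)$ and $G_Y(\theta)$, and the two off-diagonal ones factor, by independence, as $\Exp_{p^X_\theta}[s^X_\theta]\,\Exp_{p^Y_\theta}[s^Y_\theta]^\top$ and its transpose, which vanish since each score is centered. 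This gives $G_{(X,Y)}(\theta) = G_X(\theta) + G_Y(\theta)$.

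For \textbf{monotonicity}, the key step is the identity that the score of the statistic $F(\rv X)$ is the conditional expectation of the score of $\rv X$ given $F(\rv X)$, namely $s^{F(X)}_\theta\del{F(x)} = \Exp[\, s^X_\theta(X) \mid F(X) = F(x) \,]$, which follows by writing the law of $F(\rv X)$ as the pushforward of $p_\theta$ and differentiating under the integral. Granting it, fix $v \in \R^d$ and apply the conditional Jensen inequality to $t \mapsto t^2$:
\begin{equation*}
	v^\top G_F(\theta)\, v = \Exp\Bigl[ \bigl( \Exp[\, v^\top s^X_\theta \mid F(X) \,] \bigr)^2 \Bigr] \le \Exp\Bigl[ \Exp[\, (v^\top s^X_\theta)^2 \mid F(X) \,] \Bigr] = v^\top G(\theta)\, v,
\end{equation*}
the last equality being the tower property. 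Since $v$ is arbitrary, $G_F(\theta) \preceq G(\theta)$.

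The main obstacle is the equality characterization. The direction ``$F$ sufficient $\Rightarrow$ equality'' is quick: by the Fisher--Neyman factorization $p_\theta(x) = g_\theta\del{F(x)}\,h(x)$, so $s^X_\theta(x) = \nabla_\theta \log g_\theta\del{F(x)}$ already depends on $x$ only through $F(x)$, conditioning on $F(X)$ is inert, and the inequality above is saturated. For the converse, equality of the quadratic forms for every $v$ forces, by the equality case of conditional Jensen, that $v^\top s^X_\theta(X)$ be $\sigma(F(X))$-measurable ($p_\theta$-a.s.) for each $v$, hence that the vector score $s^X_\theta(X)$ is itself a function of $F(X)$ for every $\theta$; one must then integrate this pointwise (infinitesimal) information back to a \emph{global} factorization of $p_\theta$ through $F$, which uses connectedness of $\Theta$ and path integration of the score in $\theta$. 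This last step is the genuinely delicate one, and I would invoke the treatment in \cite{kagan2001} rather than reproduce it, noting that it makes precise the intuition that a statistic retaining all the local Fisher information about $\theta$ must already be globally sufficient.
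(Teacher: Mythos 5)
The paper itself offers no proof of this proposition: it is stated as background and attributed to \cite{amari2000,kagan2001}, so there is no internal argument to compare against. Your proposal is the standard textbook proof and is essentially sound. Additivity is correct as written: under independence the joint score splits as a sum, the cross terms factor and vanish because each score is centered, and the diagonal terms give $G_X(\theta)+G_Y(\theta)$. For monotonicity, the identity $s^{F(X)}_\theta(F(x)) = \Exp[s^X_\theta(X)\mid F(X)=F(x)]$ plus conditional Jensen (equivalently, the law of total variance applied to $v^\top s^X_\theta$) gives $G_F(\theta)\preceq G(\theta)$; just note that this identity itself needs the laws of $F(X)$ to form a dominated family with differentiation under the integral permitted, which you correctly sweep into the ``usual regularity conditions.'' The sufficiency~$\Rightarrow$~equality direction via Fisher--Neyman factorization is also fine.

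The one genuinely incomplete piece is the converse of the equality case, which you defer to \cite{kagan2001}. You have correctly isolated why it is delicate: equality for every $v$ and every $\theta$ only yields the \emph{pointwise} statement that the score is $\sigma(F(X))$-measurable for each $\theta$, and promoting this to a global factorization $p_\theta(x)=g_\theta(F(x))h(x)$ requires integrating the score identity along paths in $\Theta$ (hence connectedness and enough smoothness), with $h$ obtained from a reference parameter value. Since the paper also relies on the same citations for exactly this point, deferring it is defensible; but be aware that as a self-contained proof your argument establishes additivity, the inequality, and one direction of the equality characterization, while the other direction remains cited rather than proved.
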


Let $\rv{X}$ be a random variable on $\R^n$, and $\rv{X}_\pi$ and $\rv{X}_Q$ its wrapped and quantized factors, respectively. We denote their respective Fisher information matrices by $G(\theta)$, $G_\pi(\theta)$ and $G_\calQ(\theta)$. By additivity, the Fisher information of $p_\pi \otimes p_\calQ$ is $\tilde G (\theta) \coloneqq G_\pi (\theta) + G_\calQ (\theta)$, and, by monotonicity, we have both $G_\pi (\theta) \preceq G(\theta)$ and $G_\calQ (\theta) \preceq G(\theta)$. It follows immediately that

\begin{equation}
	\frac{\tilde G(\theta)}{2} = \frac{G_\pi (\theta) + G_Q (\theta)}{2} \preceq G(\theta).
\end{equation}

\begin{Example}
	In the family of exponential distributions, as in Example \ref{ex:exponential}, the independence of $X_\pi$ and $X_\calQ$ implies that the Fisher information matrix is additive. Indeed, for $\Lambda = \alpha \Z$:
	
	\begin{equation*}
		G(\nu) = \frac{1}{\nu^2},\quad
		G_\pi(\nu) = \frac{1}{\nu^2} + \frac{\alpha^2}{2(1-\cosh(\alpha\nu))},\quad \text{and}\quad
		G_\calQ(\nu) = \frac{\alpha^2}{2(\cosh (\alpha\nu) -1)}.
	\end{equation*}
\end{Example}

\section{A Generalization to Topological Groups}

A topological group is a topological space $(G, \tau_G)$ that is also a group with respect to some operation $\cdot$ called \emph{product}, and such that the inverse $g^{-1}$ and product $g\cdot h$ are continuous. As additional requisites, we ask $G$ to be locally compact, Hausdorff and second-countable (has a countable basis)~\cite{pontryagin1986}.
Let $\calB_G$ be the the Borel $\sigma$-algebra of $G$. Haar's theorem says there is a unique (up to a constant) Radon measure on $G$ that is invariant by left translations---we will suppose a fixed normalization, and denote both the measure and integration with respect to it by $\dif g$. The group $G$ is said to be \emph{unimodular} if $\dif g$ is also invariant by right translations. Since $G$ is $\sigma$-compact, the Haar measure is $\sigma$-finite~\cite{heyer1977}.

Let $\Gamma$ be a discrete subgroup of $G$, which is necessarily closed, since $G$ is Hausdorff, and countable, since $G$ is second-countable. Let us also consider the quotient space of left cosets $G/\Gamma = \set{ \bar g = g\Gamma \; \middle| \; g \in G },$ which has a natural projection $\pi\colon G \to G/\Gamma$, given by $\pi(g)= \bar g$. We call $\Gamma$ a \emph{lattice} if the induced Haar measure on $G/\Gamma$ is finite and bi-invariant. A particular case is when the quotient $G/\Gamma$ is compact; then $\Gamma$ is said to be a \emph{uniform lattice}. A \emph{cross-section} is defined as a set $\calD \subset G$ of representatives of $G/\Gamma$ such that all cosets are uniquely represented. A \emph{fundamental domain} is a measurable cross-section. It can be shown that $\Gamma$ is a lattice if, and only if, it admits a fundamental domain. Furthermore, every fundamental domain has the same measure~\cite{raghunathan1972, reiter2000}.

Let $P$ be a probability measure on the space $(G,\calB_G)$ that is absolutely continuous with respect to the Haar measure $\dif g$. By the Radon-Nikodym theorem, we can define a density function $p = \od{P}{g} \in L^1 (G)$, such that $p \ge 0$ and  $P(A) = \int_A p(g) \dif g$, for all $A \in \calB_G$. The original measure can be represented as $P = p \dif g$, and we consider the family of all such densities as

\begin{equation*}
	\calP (G) = \set{ p\in L^1(G) \; \middle| \; p\ge0\; \text{$\mu$-a.s.},\; \textstyle\int p \dif g = 1}.    
\end{equation*}

Probability distributions on locally compact groups have been studied in \cite{heyer1977}, and some information-theoretic properties have been investigated in~\cite{chirikjian2009,johnson2000, chirikjian2010}. The result that allows us to consider wrapped distributions in this context is the \emph{Weil formula}, taken as a particular case of \cite[Thm. 3.4.6]{reiter2000}:

\begin{Theorem} \label{thm:weil}
	For any $f \in L^1(G)$, the wrapping $f_\pi \in L^1 (G/\Gamma)$, $f_\pi (\bar g) \coloneqq \sum_{\lambda \in \Gamma} f(g\lambda)$ is well defined $\dif\bar g$-almost everywhere, belongs in $L^1(G/\Gamma)$, and
	
	\begin{equation}
		\int_G f(g) \dif g
		=
		\int_{G/\Gamma} \sum_{\lambda \in \Gamma} f(g\lambda) \dif\bar g.
	\end{equation}
	
\end{Theorem}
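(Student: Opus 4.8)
The plan is to reduce the statement to the classical Weil formula for the quotient map $G \to G/\Gamma$ by recognising that the problem is purely about rewriting an integral over $G$ as an iterated integral over cosets. Since the paper cites \cite[Thm.~3.4.6]{reiter2000} as the source, the real content here is bookkeeping: checking that the hypotheses of that general theorem are met in our setting (left Haar measure $\dif g$ on $G$, discrete subgroup $\Gamma$ with counting measure, and a choice of normalisation for the induced measure $\dif\bar g$ on $G/\Gamma$ making the formula hold), and extracting the three asserted conclusions — almost-everywhere well-definedness of the sum $f_\pi(\bar g) = \sum_{\lambda\in\Gamma} f(g\lambda)$, membership $f_\pi \in L^1(G/\Gamma)$, and the integral identity.

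First I would fix notation: on $\Gamma$ take the counting measure, which is the Haar measure of the discrete group, and observe that $\Gamma$ unimodular is automatic. Because we have assumed $\Gamma$ is a lattice, it admits a fundamental domain and the induced quotient measure on $G/\Gamma$ is finite and bi-invariant; this is exactly the situation in which a $G$-invariant Radon measure $\dif\bar g$ on $G/\Gamma$ exists and is unique up to scaling, so I would normalise $\dif\bar g$ so that the Weil formula holds with constant $1$. Then I would state the abstract Weil/quotient-integral formula: for the canonical surjection $\pi\colon G \to G/\Gamma$ and $f\in L^1(G)$, the function $T_\Gamma f(\bar g) := \sum_{\lambda\in\Gamma} f(g\lambda)$ is defined for $\dif\bar g$-almost every $\bar g$, lies in $L^1(G/\Gamma)$ with $\|T_\Gamma f\|_{L^1(G/\Gamma)} \le \|f\|_{L^1(G)}$, and $\int_G f\,\dif g = \int_{G/\Gamma} T_\Gamma f(\bar g)\,\dif\bar g$. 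Identifying $f_\pi$ with $T_\Gamma f$ gives precisely the three claims.

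For completeness I would sketch why each piece holds, so that the proof is not a bare citation. Pick a fundamental domain $\calD$; then $G = \bigsqcup_{\lambda\in\Gamma}\calD\lambda$ up to null sets (or exactly, given the convention adopted in the Euclidean section), so by Tonelli applied to $|f|\ge 0$,
\begin{equation*}
	\int_G |f(g)|\,\dif g = \int_{\calD}\sum_{\lambda\in\Gamma} |f(g\lambda)|\,\dif g < \infty,
\end{equation*}
which forces $\sum_{\lambda}|f(g\lambda)| < \infty$ for a.e.\ $g$; hence the sum defining $f_\pi$ converges absolutely a.e., and by right-$\Gamma$-invariance of the summand in $g$ it descends to a well-defined function on $G/\Gamma$. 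The same Tonelli computation with $|f|$ replaced by $f$ and the identification of $\int_{\calD}(\cdot)\,\dif g$ with $\int_{G/\Gamma}(\cdot)\,\dif\bar g$ (which is how $\dif\bar g$ is normalised) yields both the $L^1$ bound and the integral identity; one passes from $|f|$ to general $f\in L^1$ by splitting into positive and negative parts. I expect the only genuinely delicate point to be the measurability and the existence/uniqueness of the invariant quotient measure $\dif\bar g$ making the constant equal to $1$ — i.e.\ justifying that $\int_{\calD} h\circ\pi\,\dif g$ depends only on $h$ and not on the choice of fundamental domain $\calD$; this is standard for lattices (it follows from left-invariance of $\dif g$ together with the tiling property), and since the statement is explicitly quoted from \cite{reiter2000}, I would simply invoke that reference for this step rather than reprove Weil's theorem from scratch.
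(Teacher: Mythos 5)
Your proposal is correct and matches the paper's treatment: the paper gives no proof of Theorem~\ref{thm:weil} beyond quoting it as a special case of \cite[Thm.~3.4.6]{reiter2000} (with counting measure on the discrete subgroup $\Gamma$ and the suitably normalised invariant measure $\dif\bar g$ on $G/\Gamma$, whose existence is guaranteed by the paper's definition of a lattice), which is exactly your reduction. Your supplementary Tonelli argument over a fundamental domain is a sound, standard way to make the discrete-subgroup case self-contained, but it goes beyond what the paper itself does.
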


As a consequence, for every probability density $p \in \calP (G)$, we can consider its wrapping $p_\pi (\bar g) = \sum_{\lambda \in \Gamma} p(g \lambda)$, which is $L^1(G/\Gamma)$, non-negative and is also a probability density: $\int_{G/\Gamma} p_\pi \dif\bar g = 1$. The associated probability measure over $(G/\Gamma, \calB_{G/\Gamma})$ is $P_{\pi} = p_\pi \dif\bar g$. This notation, suggesting $P_{\pi}$ as the push-forward measure by $\pi$, is not a coincidence, since, from Theorem \ref{thm:weil},
\[    \pi_* P(A) =
\int_{G} \mathbbm{1}_{A}(\pi(g)) p(g) \dif g =
\int_{G/\Gamma} \sum_{\lambda \in \Gamma} \mathbbm{1}_{A}(\pi(g)) p(g\lambda) \dif\bar g =
\int_{G/\Gamma} \mathbbm{1}_A(\bar g) p_\pi (\bar g) \dif\bar g =
P_\pi (A).
\]

Analogously, given a fundamental domain $\calD$, it is possible to define a quantization map $\calQ \colon G \to \Lambda$ by $\calQ(g \lambda) = \lambda$, for every $g\in\calD, \lambda\in\Gamma$, which is unique since $G = \bigsqcup_{g \in \calD} g \Gamma$. The quantized probability distribution is the discrete probability measure $P_\calQ$ over $\Lambda$, defined by the mass function $p_\calQ (\lambda) = \int_{\calD} p(g\lambda) \dif g$, or as the push-forward measure $\calQ_* P$.

If $\rv{X}$ is distributed according to $p$, and $\rv{X}_\pi = \pi(\rv{X}) \sim p_\pi$, $\rv{X}_\calQ = \calQ(\rv{X}) \sim p_\calQ$, then $\rv{X} = \rv{X}_\pi \cdot \rv{X}_\calQ$, again, as a consequence of $g \mapsto \del{\pi(g), \calQ(g)}$ being a measurable bijection whose inverse is the product $\pi(g) \cdot \calQ(g)$. Despite being an abstract definition, this framework expands the scope of the previous approach, cf.~examples below. In the following, let $\Lambda \subset \R^n$ be a full-rank lattice, and $\Lambda_s \subset \Lambda$ be a full-rank sublattice, as defined in Section~\ref{sec:Lattices}.

\begin{Example}
	Let $G = \R^n$ and $\Gamma = \Lambda$. This recovers the approach from Section~\ref{sec:Lattices} as a particular case.
\end{Example}

\begin{Example}\label{ex:sublattice}
	Let $G = \Lambda$, and $\Gamma = \Lambda_s$. A fundamental domain is a choice $\calD = \set{d_1, \dots, d_k}$ of $k= \abs{\Lambda/\Lambda_s}$ points, where each point corresponds to a coset $\bar\lambda = (\lambda+\Lambda_s) \in \Lambda / \Lambda_s$. Of particular interest are Voronoi constellations~\cite{forney1989, boutros2017} where the coset leaders are selected, with some choice made for ties.
	Since $\Lambda$ is discrete, the Haar measure is the counting measure $\mu(A) = |A|$, and $p\colon \Lambda \to [0,1]$. The wrapped and quantized distributions are $p_\pi(\bar\lambda) = \sum_{\lambda_s \in \Lambda_s} p(\lambda+\lambda_s)$, and $p_\calQ (\lambda_s) = \sum_{i=1}^k p(d_i + \lambda_s)$.
\end{Example}

\begin{Example}
	Let $G = \R^n/\Lambda_s$ (a torus) and $\Gamma = \pi_s(\Lambda)$ (the projection of $\Lambda$ to $G$). Then $\pi_s(\Lambda)$ consists of a finite family of cosets $\bar\lambda_1,\dots,\bar\lambda_k$, for $k=\abs{\Lambda/\Lambda_s}$, and a choice of fundamental domain $\bar\calD$ is the projection of a fundamental domain $\calD$ of $\Lambda$.
	There are some standard choices for the distribution on $G$, such as a wrapping from the Euclidean space and the bivariate von Mises distribution~\cite[Section 11.4]{mardia2000}. Then $p_\pi(\bar x) = \sum_{i=1}^k p(\bar x + \bar \lambda_i)$ and $p_\calQ (\bar\lambda_i) = \int_{\bar\calD} p(\bar x + \bar\lambda_k) \dif \bar x$, and, in the particular case where $p(\bar x) = \sum_{\lambda_s \in \Lambda_s} p(x+\lambda_s)$ is a $\Lambda_s$-wrapped distribution, they become $p_\pi (\bar x) = \sum_{i=1}^k \sum_{\lambda_s \in \Lambda_s} p(x+\lambda_s + \lambda_i)$ and $p_\calQ(\bar\lambda_i) = \sum_{\lambda_s \in \Lambda_s} \int_{\calD} p(x+\lambda_s+\lambda_i) \dif x$.
\end{Example}

\begin{Example}
	Let $G = \mathbb{F}_q^n$ (a finite field) or $G=\Z_q^n$, and $\Gamma = \mathcal C$ (any linear block code). A fundamental domain can be a finite set of points that tiles the space by $\mathcal C$. The distributions then become finite sums, such as in Example~\ref{ex:sublattice}.
\end{Example}

\begin{Example}
	Let $G = \mathrm{SL}(n,\R)$ the Lie group of square matrices with determinant $1$, and $\Gamma=\mathrm{SL}(n,\Z)$ (the subgroup of integer matrices). This is in fact a lattice, since for $n=2$, $\vol(G/\Gamma) = \sqrt{2}\zeta(2)$ where $\zeta$ is the Riemann zeta function, and for $n>2$ the finite covolume is calculated in \cite{paula2019}, where descriptions of fundamental domains are also given.
\end{Example}

\section{Conclusion}

In this work, we have studied the decomposition of a random variable through lattices into its wrapping and quantization terms. Generalization of examples and of Proposition~\ref{prop:mi-upper-bound} to higher dimensions constitutes work in progress. We have also proposed a generalization of this decomposition to topological groups; in particular, this allows one to study information theory on such abstract spaces, which is another perspective for future work.

\section*{Acknowledgments}

This work was partly supported by Brazilian National Council for Scientific and Technological Development~(CNPq) grants 141407/2020-4 and 314441/2021-2, and by S\~{a}o Paulo Research Foundation~(FAPESP) grant 2021/04516-8. The authors are grateful to Prof. Max Costa for fruitful discussions.

\bibliographystyle{ieeetr}
\bibliography{biblio}

\begin{thebibliography}{10}

\bibitem{conway1999}
J.~H. Conway and N.~J.~A. Sloane, {\em Sphere Packings, Lattices and Groups}.
\newblock New York, NY: Springer, 1999.

\bibitem{zamir2014}
R.~Zamir, {\em Lattice Coding for Signals and Networks: A Structured Coding
  Approach to Quantization, Modulation and Multiuser Information Theory}.
\newblock Cambridge: Cambridge University Press, 2014.

\bibitem{costa2017}
S.~I.~R. Costa, F.~Oggier, A.~Campello, J.-C. Belfiore, and E.~Viterbo, {\em
  Lattices Applied to Coding for Reliable and Secure Communications}.
\newblock Cham: Springer, 2017.

\bibitem{ling2014}
C.~Ling and J.-C. Belfiore, ``Achieving {AWGN} channel capacity with lattice
  {Gaussian} coding,'' {\em IEEE Trans. Inf. Theory}, vol.~60, no.~10,
  pp.~5918--5929, 2014.

\bibitem{damir2021}
M.~T. Damir, A.~Karrila, L.~Amor\'{o}s, O.~W. Gnilke, D.~Karpuk, and
  C.~Hollanti, ``Well-rounded lattices: {Towards} optimal coset codes for
  {Gaussian} and fading wiretap channels,'' {\em IEEE Trans. Inf. Theory},
  vol.~67, no.~6, pp.~3645--3663, 2021.

\bibitem{chung2013}
K.-M. Chung, D.~Dadush, F.-H. Liu, and C.~Peikert, ``On the lattice smoothing
  parameter problem,'' in {\em Proc. IEEE Conf. Comput. Complexity},
  pp.~230--241, 2013.

\bibitem{mardia2000}
K.~Mardia and P.~Jupp, {\em Directional Statistics}.
\newblock New York, NY: Wiley, 2000.

\bibitem{zamir1996}
R.~Zamir and M.~Feder, ``On lattice quantization noise,'' {\em IEEE Trans. Inf.
  Theory}, vol.~42, no.~4, pp.~1152--1159, 1996.

\bibitem{ling2013}
C.~Ling and L.~Gan, ``Lattice quantization noise revisited,'' in {\em Proc.
  IEEE Inf. Theory Workshop (ITW)}, pp.~1--5, 2013.

\bibitem{cover2006}
T.~M. Cover and J.~A. Thomas, {\em Elements of Information Theory}.
\newblock Hoboken, NJ: Wiley, 2nd~ed., 2006.

\bibitem{amari2000}
S.~Amari and H.~Nagaoka, {\em Methods of Information Geometry}.
\newblock Providence, RI: American Mathematical Society, 2000.

\bibitem{heyer1977}
H.~Heyer, {\em Probability Measures on Locally Compact Groups}.
\newblock Berlin, Heidelberg: Springer, 1977.

\bibitem{chirikjian2009}
G.~S. Chirikjian, {\em Stochastic Models, Information Theory, and {Lie}
  Groups}.
\newblock Boston, MA: Birkh\"{a}user, 2009.

\bibitem{johnson2000}
O.~Johnson and Y.~Suhov, ``Entropy and convergence on compact groups,'' {\em J.
  Theor. Probab.}, vol.~13, no.~3, pp.~843--857, 2000.

\bibitem{chirikjian2010}
G.~S. Chirikjian, ``Information-theoretic inequalities on unimodular {Lie}
  groups,'' {\em J. Geom. Mech.}, vol.~2, no.~2, pp.~119--158, 2010.

\bibitem{jammalamadaka2004}
S.~R. Jammalamadaka and T.~J. Kozubowski, ``New families of wrapped
  distributions for modeling skew circular data,'' {\em Commun. Stat.--Theory
  Methods}, vol.~33, no.~9, pp.~2059--2074, 2004.

\bibitem{chakraborty2015}
S.~Chakraborty, ``Generating discrete analogues of continuous probability
  distributions-{A} survey of methods and constructions,'' {\em J. Stat.
  Distrib. Appl.}, vol.~2, no.~6, 2015.

\bibitem{nielsen2022}
F.~Nielsen, ``The {Kullback}–{Leibler} divergence between lattice {Gaussian}
  distributions,'' {\em J. Indian Inst. Sci.}, 2022.

\bibitem{luzzi2018}
L.~Luzzi, R.~Vehkalahti, and C.~Ling, ``Almost universal codes for {MIMO}
  wiretap channels,'' {\em IEEE Trans. Inf. Theory}, vol.~64, no.~11,
  pp.~7218--7241, 2018.

\bibitem{rioul2022}
O.~Rioul, ``Variations on a theme by {Massey},'' {\em IEEE Trans. Inf. Theory},
  vol.~68, no.~5, pp.~2813--2828, 2022.

\bibitem{kagan2001}
A.~Kagan and P.~J. Smith, ``Multivariate normal distributions, {Fisher}
  information and matrix inequalities,'' {\em Int. J. Math. Educ. Sci.
  Technol.}, vol.~32, no.~1, pp.~91--96, 2001.

\bibitem{pontryagin1986}
L.~S. Pontryagin, {\em Topological Groups}.
\newblock Montreux: Gordon and Breach Science Publishers, 3rd~ed., 1986.

\bibitem{raghunathan1972}
M.~S. Raghunathan, {\em Discrete Subgroups of {Lie} Groups}.
\newblock New York, NY: Springer, 1972.

\bibitem{reiter2000}
H.~Reiter and J.~D. Stegeman, {\em Classical Harmonic Analysis and Locally
  Compact Groups}.
\newblock Oxford: Clarendon Press, 2nd~ed., 2000.

\bibitem{forney1989}
G.~Forney, ``Multidimensional constellations. {II}. voronoi constellations,''
  {\em IEEE J. Sel. Areas Commun.}, vol.~7, no.~6, pp.~941--958, 1989.

\bibitem{boutros2017}
J.~J. Boutros, F.~Jardel, and C.~Méasson, ``Probabilistic shaping and
  non-binary codes,'' in {\em Proc. IEEE Int. Symp. Inf. Theory (ISIT)},
  pp.~2308--2312, 2017.

\bibitem{paula2019}
G.~T. Paula, ``Comparison of volumes of {Siegel} sets and fundamental domains
  for $\mathrm{SL}_n (\mathbb{Z})$,'' {\em Geom. Dedicata}, vol.~199, no.~1,
  pp.~291--306, 2019.

\end{thebibliography}

\end{document}